\definecolor{myblue}{RGB}{0,0,128}
\definecolor{myblue2}{RGB}{0,32,96}
\definecolor{myblue3}{RGB}{0,64,64}
\definecolor{myblue4}{RGB}{0,96,32}
\definecolor{myblue5}{RGB}{0,128,0}
\definecolor{myblue6}{RGB}{32,96,0}
\definecolor{myblue7}{RGB}{64,64,0}
\definecolor{myblue8}{RGB}{96,32,0}
\definecolor{myblue9}{RGB}{128,0,0}
\definecolor{myblue10}{RGB}{96,0,32}
\definecolor{myblue11}{RGB}{64,0,64}
\newcommand\ket[1]{\ensuremath{|#1\rangle}}
\newcommand\bra[1]{\ensuremath{\langle#1|}}
\newcommand{\braket}[2]{\langle #1 | #2 \rangle}
\newcommand{\ketbra}[2]{| #1 \rangle\langle #2 |}
\newcommand\tr{\mathop{\rm tr}\nolimits}
\def\C{\mathbb{C}}
\newcommand{\bes}{\begin{eqnarray*}}
	\newcommand{\ees}{\end{eqnarray*}} 
\newcommand{\bpm}{\begin{pmatrix}}
	\newcommand{\epm}{\end{pmatrix}}
\def\cI{{\mathcal I}}
\def\diag{{\rm diag}\,}
\def\tr{{\rm tr}\,}
\newcommand{\defeq}{\stackrel{\smash{\textnormal{\tiny def}}}{=}}
\newtheorem{theorem}{Theorem}
\begin{document}


\title{The modified trace distance of coherence is constant on most pure states}

\author{Nathaniel Johnston}
\affiliation{Department of Mathematics and Computer Science, Mount Allison University, Sackville, NB, Canada E4L 1E4}
\affiliation{Department of Mathematics and Statistics, University of Guelph, Guelph, ON, Canada N1G 2W1}
\author{Chi-Kwong Li}
\affiliation{Department of Mathematics, College of William and Mary,  Williamsburg, VA, USA  23187}
\author{Sarah Plosker}
\affiliation{Department of Mathematics \& Computer Science, Brandon University, Brandon,
	MB, Canada R7A 6A9}
	\affiliation{Department of Mathematics and Statistics, University of Guelph, Guelph, ON, Canada N1G 2W1}

\begin{abstract}
    Recently, the much-used trace distance of coherence was shown to not be a proper measure of coherence, so a modification of it was proposed. We derive an explicit formula for this modified trace distance of coherence on pure states. Our formula shows that, despite satisfying the axioms of proper coherence measures, it is likely not a good measure to use, since it is maximal (equal to $1$) on all except for an exponentially-small (in the dimension of the space) fraction of pure states.
\end{abstract}

\date{\today}

\pacs{03.67.Ac, 03.65.Ta, 02.30.Mv, 03.67.Mn}

\maketitle

\section{I. Introduction}

There are two key resources that set quantum information theory apart from classical information theory: coherence and entanglement. A necessary requirement for entanglement is that the quantum system can be viewed as multiple quantum systems that interact. For over two decades, entanglement overshadowed coherence in the literature  \cite{BDSW96,Bra05,HHT01,Hor01,Rai99,Shi95,Ste03,VT99}. Now, much interest has percolated with respect to coherence, especially as it pertains to quantum optics \cite{Glau63, Su63}, quantum biology \cite{bio1, bio2, bio3}, and thermodynamics \cite{thermo1, thermo2, thermo3, thermo4}, as well as other areas. Multiple quantum systems are not required for coherence, and it arises in any system that cannot be reduced to a classical one (essentially, whenever one deals with superpositions of quantum states). 

Formalizing the task of measuring coherence began in \cite{Abe06}, which effectively gave a one-to-one correspondence between coherence measures and entanglement measures. Later, a framework \cite{BCP14}  of four defining properties for a coherence measure to be \emph{proper} was introduced, with these four properties being seen as highly desirable, or even required, for valid coherence measures. Three of the most commonly used  measures of coherence, namely the $\ell_1$-norm of coherence, the relative entropy of coherence, and the robustness of coherence \cite{NBCPJA16}, have all been shown to be proper coherence measures. Another measure coherence, called the trace distance of coherence \cite{RPL15}, has also received quite a bit of attention, but was recently shown to not be a proper coherence measure in general \cite{YZXT16} (although it is proper when restricted to qubit states or $X$~states). This led to a ``modified'' trace distance of coherence being proposed in \cite{YZXT16}, which was shown to indeed be a proper coherence measure, and has been further studied in \cite{CF17,HHWPZF17}.

Despite the modified trace distance of coherence being a proper measure of coherence, we demonstrate that it is not very useful, since it is equal to its maximal value (i.e., $1$) for all but an exponentially-small proportion of pure states. We also provide several related results along the way, such as an explicit formula for the modified trace distance of coherence on pure states, and we show that the closest incoherent state to a pure state in this measure can always be chosen to have just one non-zero entry. We also demonstrate numerically that similar results likely hold for density matrices with a fixed rank larger than $1$.

In Section~II, we review preliminary definitions and notation needed for the remainder of the paper. In Section~III, we present our main results on the modified version of the trace distance of coherence: we show  the non-uniqueness of  the incoherent states $\tilde \delta$ closest to a given state $\rho$, we give a formula for computing the modified trace distance of coherence for pure states and describe an optimal $\tilde p$ and $\tilde \delta$, and we show that the modified trace distance of coherence is equal to $1$---its maximum possible value---on all except for an exponentially-small (in the dimension of the space) fraction of pure states. We numerically extend our results to mixed states of higher rank in Section~IV, and we provide concluding remarks in Section~V.

\section{II. Preliminaries and the trace distance of coherence}\label{sec:prelim}

Let $\cI$ be the set of diagonal density matrices (incoherent states). For any density matrix $\rho$, the trace distance of coherence is defined as the trace norm distance between $\rho$ and the closest incoherent state: 
 \begin{eqnarray*}
C_{\tr}(\rho) \defeq  \min_{\delta\in \mathcal{I}}\|\rho-\delta\|_{\tr}=\min_{\delta\in \mathcal{I}}\sum_{i=1}^n|\lambda_i(\rho-\delta)|,
\end{eqnarray*}
where $\lambda_i(\rho-\delta)$ are the eigenvalues of the matrix $\rho-\delta$.

For a coherence measure $C$ to be a proper measure of coherence, it must satisfy the following four conditions \cite{BCP14}:
\begin{enumerate}
    \item[(1)] $C(\rho) \geq 0$, with equality if and only if $\rho\in \cI$;
    \item[(2)] $C(\rho)\geq C(\Lambda(\rho))$ if $\Lambda$ is an incoherent operation, i.e.\ a completely positive trace preserving linear (CPTP) map $\Lambda(\rho)=\sum_iK_i\rho K_i^\dagger$ whose Kraus operators satisfy $K_i\mathcal{I}K_i^\dagger \subset\mathcal{I}$; 
    \item[(3)] $C(\rho)\geq \sum_jp_jC(\rho_j)$ where $p_j=\tr(K_j\rho K_j^\dagger)$, $\rho_j=(K_j\rho K_j^\dagger)/p_j$, and $\{K_j\}$ is a set of incoherent Kraus operators (that is, Kraus operators that satisfy $K_i\mathcal{I}K_i^\dagger \subset\mathcal{I}$); and
    \item[(4)] $\sum_jp_jC(\rho_j)\geq C(\sum_jp_j\rho_j)$ for any set of  states $\{\rho_j\}$ and any probability distribution $\{p_j\}$.
\end{enumerate}

Items (3) and (4) above (monotonicity   under  selective  measurements  on  average and non-increasing under mixing of quantum states (convexity), respectively) are equivalent to $C(p_1\rho_1\oplus p_2\rho_2)=p_1C(\rho_1)+p_2C(\rho_2)$ for all block-diagonal states $\rho$ in the incoherent basis \cite{YZXT16}; this equation is more readily manipulated and was shown to be violated for the trace distance of coherence (and in particular, condition~(3) above does not hold for the trace distance of coherence).

In order to address this problem, the following \emph{modified trace distance of coherence} (originally called ``modified trace norm of coherence'') was proposed \cite{YZXT16}:
\begin{equation}\label{defn:mod_tr_dist}
    C^\prime_\tr(\rho) \defeq \min \big\{\|\rho- p\delta\|_\tr: \delta \in \cI, p \in [0, \infty) \big\}.
\end{equation}
The advantage of this measure is that it really is a proper coherence monotone (i.e., it satisfies conditions (1)--(4) above), while still retaining the ``spirit'' of the trace distance of coherence.

\section{III. Main results}\label{sec:modified}

In the $n = 2$ case, the modified trace distance of coherence coincides with the familiar \emph{$\ell_1$-norm of coherence}:
\begin{eqnarray*}
C_{\ell_1}(\rho) & \defeq & \sum_{i\neq j} |\rho_{ij}|.
\end{eqnarray*}

While this fact is already known \cite{CF17}, we state it as a theorem and include an alternate proof below that illustrates the non-uniqueness of ``the'' state $\tilde \delta\in \cI$ attaining the minimum of $C^\prime_\tr$ for a given $\rho$. This non-uniqueness is important, as it plays a role in the pure state result that we will prove shortly.

\begin{theorem}\label{thm:mod_tr_coh_qubit}
    If $\rho \in M_2(\mathbb{C})$ then $C^\prime_\tr(\rho) = C_{\ell_1}(\rho) = 2|\rho_{12}|$.
\end{theorem}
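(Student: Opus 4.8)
The plan is to reduce the minimization defining $C^\prime_\tr$ to an explicit optimization over two nonnegative scalars and then bound the resulting $2\times2$ trace norm from above and below. The key first observation is that, since $p$ ranges over $[0,\infty)$ and $\delta$ over the diagonal density matrices, the product $p\delta$ ranges over \emph{all} diagonal positive semidefinite $2\times2$ matrices, i.e.\ over $\diag(x,y)$ with $x,y\ge0$. Writing $\rho=\begin{pmatrix}a & b\\ \bar b & 1-a\end{pmatrix}$, the quantity to minimize is therefore $\|M(x,y)\|_\tr$ over $x,y\ge0$, where $M(x,y)=\begin{pmatrix}a-x & b\\ \bar b & 1-a-y\end{pmatrix}$. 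Alongside this I would record the elementary closed form for the trace norm of a Hermitian $2\times2$ matrix with diagonal entries $\alpha,\gamma$ and off-diagonal entry $\beta$: it equals $|\alpha+\gamma|$ when the determinant $\alpha\gamma-|\beta|^2$ is nonnegative, and $\sqrt{(\alpha-\gamma)^2+4|\beta|^2}$ when that determinant is negative. The upper bound is then immediate: taking $x=a$, $y=1-a$ leaves $M(x,y)=\begin{pmatrix}0 & b\\ \bar b & 0\end{pmatrix}$, whose trace norm is exactly $2|b|$, so $C^\prime_\tr(\rho)\le2|\rho_{12}|$.

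For the matching lower bound I would fix arbitrary $x,y\ge0$ and apply the closed form to $M(x,y)$ (so $\alpha=a-x$, $\gamma=1-a-y$, $\beta=b$), splitting on the sign of $\det M(x,y)$. If $\det M(x,y)<0$ then $\|M(x,y)\|_\tr=\sqrt{(\alpha-\gamma)^2+4|b|^2}\ge2|b|$ directly. If $\det M(x,y)\ge0$ then $\alpha\gamma\ge|b|^2$, so $\alpha$ and $\gamma$ share a sign (the subcase $b=0$ being trivial), whence $\|M(x,y)\|_\tr=|\alpha+\gamma|=|\alpha|+|\gamma|\ge2\sqrt{|\alpha|\,|\gamma|}\ge2|b|$ by AM--GM. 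In either case $\|M(x,y)\|_\tr\ge2|b|$, so $C^\prime_\tr(\rho)\ge2|\rho_{12}|$, and together with the trivial identity $C_{\ell_1}(\rho)=|\rho_{12}|+|\rho_{21}|=2|\rho_{12}|$ this proves the theorem. I do not expect a genuine obstacle; the only step that needs a moment's thought is that the AM--GM bound requires $\alpha$ and $\gamma$ to have a common sign, which is precisely what $\det M(x,y)\ge0$ guarantees.

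Finally, to bring out the non-uniqueness of the optimal incoherent state $\tilde\delta$ that will be needed in the pure-state analysis, I would observe that equality in the lower-bound argument holds not just at $(x,y)=(a,1-a)$ but all along the segment $(x,y)=(a-t,\,1-a-t)$ for every $t$ with $-|b|\le t\le\min\{|b|,a,1-a\}$: on this segment $\alpha=\gamma$ and $\det M(x,y)\le0$, which forces $\|M(x,y)\|_\tr=2|b|$. Each such $t$ with $1-2t>0$ corresponds to an optimal pair $p=1-2t$, $\delta=\diag\!\big(\tfrac{a-t}{1-2t},\,\tfrac{1-a-t}{1-2t}\big)$, and these $\delta$ are pairwise distinct whenever $\rho_{11}\ne\rho_{22}$, so the minimizing $\tilde\delta$ is far from unique; the positivity constraints $x,y\ge0$ are what cut the admissible range of $t$ off at $\min\{a,1-a\}$.
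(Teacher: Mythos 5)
Your proof is correct, and its lower-bound mechanism differs from the paper's. The paper swaps the two rows of $\rho-p\,\mathrm{diag}(d_1,d_2)$ (the trace norm is permutation-invariant) and invokes the inequality $\|A\|_\tr\ge|a_{11}|+|a_{22}|$, which gives $\ge 2|\rho_{12}|$ in one line; the equality condition of that inequality then hands over the \emph{entire} set of optimal incoherent states at once, which is how the paper exhibits non-uniqueness. You instead observe that $p\delta$ sweeps out all diagonal PSD matrices $\mathrm{diag}(x,y)$ with $x,y\ge0$ (the same reformulation underlying the paper's SDP~\eqref{sdp:mod_coh_primal}), and then bound $\|M(x,y)\|_\tr$ from below via the explicit $2\times2$ Hermitian trace-norm formula, splitting on the sign of $\det M$ and using AM--GM in the nonnegative-determinant case; your case analysis is sound, including the $b=0$ subcase. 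What the paper's route buys is brevity and an exact description of the optimizer set; what yours buys is self-containedness (no appeal to the diagonal-dominance inequality and its equality condition) and a cleaner separation of the upper bound, the lower bound, and the non-uniqueness remark---though you only exhibit a one-parameter family of optimizers rather than all of them, which suffices for the purpose it serves later. Incidentally, your range $-|b|\le t\le\min\{|b|,a,1-a\}$ is the precise one; the paper's displayed optimal set omits the cap at $|\rho_{12}|$ even though its own equality condition imposes it.
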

\begin{proof}
    We compute
    \begin{eqnarray*}
    && \|\rho - p\cdot\diag(d_1, d_2)\|_\tr 
    = \left\|\begin{pmatrix}
    \rho_{11} - p d_1 & \rho_{12} \cr \rho_{21} & \rho_{22} - pd_2\cr
    \end{pmatrix}\right\|_\tr \\
    &=& \left\|\begin{pmatrix} \rho_{12} & \rho_{22} - pd_2\cr
    \rho_{11} - p d_1 & \rho_{21} \cr 
    \end{pmatrix}\right\|_\tr \ge 2|\rho_{12}|,
    \end{eqnarray*}
    where the inequality holds with equality if and only if 
    $\rho_{22} - pd_2 =  \rho_{11} - p d_1 = \mu$ for some $\mu$ with $|\mu| \le |\rho_{12}|$.
    So, the optimal solution set for $\diag(dp_1, dp_2)\in \cI$ equals
    $$\big\{\diag(\rho_{11} - \mu, \rho_{22} - \mu): -|\rho_{12}| \le \mu \le \min\{\rho_{11},\rho_{22}\}\big\}.$$
    In particular, we can choose $\mu = 0$ to get $\delta = \diag(\rho_{11},\rho_{22})$ so that 
    $$\rho - \diag(pd_1, pd_2) = \rho_{12}E_{12}+\rho_{21}E_{21},$$ where $E_{ij}$ is an appropriately sized matrix with 1 in the $(i,j)$-th position and zeros elsewhere. The above expression has trace norm equal to $2|\rho_{12}|$, which equals $C_{\ell_1}(\rho)$.
\end{proof}

The above proof shows that the incoherent state attaining the minimum in the modified trace distance of coherence is very non-unique in general. For example, instead of choosing $\mu = 0$ like we did in the proof, we could have chosen $\mu = -|\rho_{12}|$ to get the incoherent state $\diag(\rho_{11} + |\rho_{12}|, \rho_{22} + |\rho_{12}|)$ so that
\[
    \rho - \diag(pd_1, pd_2) = -|\rho_{12}|I + \rho_{12}E_{12} + \rho_{21}E_{21},
\]
which also has trace norm equal to $2|\rho_{12}|$.

Next, we present an explicit method of computing $C^\prime_\tr(\ketbra{x}{x})$ (i.e., the modified trace distance of coherence when restricted to pure states) that is analogous to the method that was derived in \cite{cjlp} for the (standard) trace distance of coherence. It turns out that the formulas involved for this version of the trace distance are actually significantly simpler than they were for the original version. In particular, we show that $C^\prime_\tr(\ketbra{x}{x})$ is simply a function of the largest entry of $\ket{x}$, which we denote by $\|\ket{x}\|_{\infty} := \max_{j} \{|x_j|\}$.

\begin{theorem}\label{thm:C'mu}
    Suppose $\ket{x} \in \mathbb{C}^n$ is a pure state.
    \begin{enumerate}
        \item[a)] If $\|\ket{x}\|_{\infty} \leq 1/\sqrt{2}$ then $C^\prime_\tr(\ketbra{x}{x}) = 1$. Furthermore, an optimal $p$ in~\eqref{defn:mod_tr_dist} is $\tilde{p} = 0$.
        
        \item[b)] If $\|\ket{x}\|_{\infty} > 1/\sqrt{2}$ then \[{} \quad \quad C^\prime_\tr(\ketbra{x}{x}) = 2\|\ket{x}\|_{\infty}\sqrt{1 - \|\ket{x}\|^2_{\infty}}.\]Furthermore, an optimal $p$ and $\delta$ in~\eqref{defn:mod_tr_dist} are
        \[
            {} \quad \quad \tilde{p} = 2\|\ket{x}\|^2_{\infty} - 1 \quad \text{and} \quad \tilde{\delta} = \mathrm{diag}(1,0,0,\ldots,0),
        \]
        respectively.
    \end{enumerate}
\end{theorem}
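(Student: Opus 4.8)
The plan is to compute $\min_{p\ge 0,\,\delta\in\cI}\|\ketbra{x}{x}-p\delta\|_\tr$ by reducing to an eigenvalue computation. Write $\ket{x}=(x_1,\dots,x_n)^T$ and, after permuting coordinates, assume $|x_1|=\|\ket{x}\|_\infty$. The key structural observation is that the matrix $M:=\ketbra{x}{x}-p\delta$ is Hermitian with $p\delta$ diagonal, so $\|M\|_\tr=\sum_i|\lambda_i(M)|$. Since $\ketbra{x}{x}$ has rank one, I expect the optimal $\delta$ to be supported on a single coordinate (matching the claim $\tilde\delta=\mathrm{diag}(1,0,\dots,0)$ in part b); to justify restricting attention to such $\delta$, I would argue as in the $n=2$ proof above that moving ``weight'' in $\delta$ off the largest coordinate only increases (or fails to decrease) the relevant eigenvalue sum, so it suffices to analyze $M_t:=\ketbra{x}{x}-t\,E_{11}$ for $t\ge 0$ and take the minimum over $t$.

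For the reduced problem, $M_t$ agrees with $\ketbra{x}{x}$ except that the $(1,1)$ entry is $|x_1|^2-t$. I would compute its spectrum: $M_t$ has an $(n-2)$-dimensional kernel (the orthogonal complement of $\span\{\ket{x},\ket{e_1}\}$ inside the orthogonal complement of, roughly, the coordinate directions involved), leaving a $2\times 2$ effective block acting on $\span\{\ket{e_1},\ket{v}\}$ where $\ket{v}$ is the (normalized) component of $\ket{x}$ orthogonal to $\ket{e_1}$, with $\|\ket{v}\|^2=1-\|\ket{x}\|_\infty^2$. In that basis $M_t$ is represented by a real symmetric $2\times 2$ matrix whose entries are elementary functions of $\|\ket{x}\|_\infty$ and $t$; its two eigenvalues $\lambda_\pm(t)$ have product $\det = -t(1-\|\ket{x}\|_\infty^2)$ (for $t>0$ this is negative, so one eigenvalue is positive and one negative) and a trace equal to $1-t$. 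Hence for $t>0$,
\[
\|M_t\|_\tr = |\lambda_+(t)|+|\lambda_-(t)| = \sqrt{(\lambda_++\lambda_-)^2-4\lambda_+\lambda_-} = \sqrt{(1-t)^2+4t(1-\|\ket{x}\|_\infty^2)},
\]
and for $t=0$ we get $\|M_0\|_\tr=\tr\ketbra{x}{x}=1$.

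The final step is to minimize $f(t):=(1-t)^2+4t(1-\|\ket{x}\|_\infty^2)$ over $t\ge 0$ (then take the square root), together with the boundary value $f(0)=1$. This is a routine one-variable calculus exercise: $f$ is an upward parabola with unconstrained minimizer $t^\star=2\|\ket{x}\|_\infty^2-1$. If $\|\ket{x}\|_\infty\le 1/\sqrt2$ then $t^\star\le 0$, so the constrained minimum on $[0,\infty)$ is at $t=0$, giving $C^\prime_\tr=\sqrt{f(0)}=1$ with optimal $\tilde p=0$ — this is part a). If $\|\ket{x}\|_\infty>1/\sqrt2$ then $t^\star>0$ is admissible, and $f(t^\star)=4\|\ket{x}\|_\infty^2(1-\|\ket{x}\|_\infty^2)$, so $C^\prime_\tr=2\|\ket{x}\|_\infty\sqrt{1-\|\ket{x}\|_\infty^2}$ with $\tilde p=2\|\ket{x}\|_\infty^2-1$ and $\tilde\delta=\mathrm{diag}(1,0,\dots,0)$ — this is part b). I also need to check the upper bound matches: the computed $M_{t^\star}$ genuinely realizes this trace norm, and no other choice of $\delta$ does better, which is exactly the reduction I flagged above.

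The main obstacle is that reduction step: showing that among all incoherent $\delta$ and all $p\ge 0$, it is never advantageous to spread $p\delta$ across more than one diagonal entry (and that the one entry should be the location of the largest $|x_j|$). I would handle this by a perturbation/majorization argument — fixing the total ``mass'' and showing $\|\ketbra{x}{x}-D\|_\tr$, as a function of the diagonal matrix $D$, is minimized at a vertex of the relevant region — or, more concretely, by writing the $2\times 2$ effective block for general $D=\mathrm{diag}(d_1,\dots,d_n)$ in the $\span\{\ket{x},\text{(its annihilator)}\}$ picture and checking directly that the off-diagonal ``obstruction'' to cancellation is governed by $\|\ket{x}\|_\infty$, exactly as the $n=2$ computation in Theorem~\ref{thm:mod_tr_coh_qubit} foreshadows. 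Everything after that reduction is the short eigenvalue-and-parabola computation sketched above.
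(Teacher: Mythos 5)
Your computation for the restricted problem (where $p\delta = tE_{11}$ sits on the single coordinate of the largest entry) is correct, and it matches the achievability half of the paper's argument: the $2\times 2$ effective block has trace $1-t$ and determinant $-t(1-\|\ket{x}\|_\infty^2)$, the parabola $f(t)=(1-t)^2+4t(1-\|\ket{x}\|_\infty^2)$ has minimizer $t^\star=2\|\ket{x}\|_\infty^2-1$, and this reproduces the claimed optimal $\tilde p,\tilde\delta$ and the value $2\|\ket{x}\|_\infty\sqrt{1-\|\ket{x}\|_\infty^2}$ (the paper gets the same numbers by exhibiting the eigenvalues $\lambda_\pm=(1-x_1^2)\pm x_1\sqrt{1-x_1^2}$ directly). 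But this only yields an \emph{upper} bound on $C^\prime_\tr(\ketbra{x}{x})$. The substantive half of the theorem is the lower bound: that no diagonal $D\succeq O$ whatsoever achieves a smaller trace norm, and in case (a) that nothing beats the value $1$. That is exactly the step you defer to a ``reduction to single-coordinate $\delta$,'' and the reduction is not proved; you yourself flag it as the main obstacle. The justifications you sketch do not work as stated: minimizing the convex function $D\mapsto\|\ketbra{x}{x}-D\|_\tr$ over the convex cone of PSD diagonal matrices is not a problem whose optimum sits at a ``vertex'' (vertex attainment is a maximization phenomenon for convex functions); for a $D$ with two or more nonzero entries the matrix $\ketbra{x}{x}-D$ is in general full rank, so there is no $2\times 2$ effective block to analyze; and the $n=2$ trick of swapping rows to expose the bound $2|\rho_{12}|$ has no evident analogue for general $n$ and general diagonal $D$. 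Note also that the optimizer is genuinely non-unique (the paper stresses this after Theorem~\ref{thm:mod_tr_coh_qubit}), so ``the optimum must concentrate on one coordinate'' is false as a statement about all optimizers; what is true is that \emph{some} optimizer has this form, and that is only known once the optimal value itself is pinned down.

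The paper closes this gap by weak SDP duality: it exhibits explicit feasible points of the dual program~\eqref{sdp:mod_coh_dual}. In case (b) the dual certificate is $Y=(\ketbra{v_-}{v_-}-\ketbra{v_+}{v_+})/2$ built from the eigenvectors of the very matrix you computed, which certifies $C^\prime_\tr\ge 2x_1\sqrt{1-x_1^2}$ against \emph{all} diagonal $D\succeq O$; in case (a) it constructs a vector $\ket{y}$ with $|y_j|=|x_j|$ for all $j$ and $\braket{x}{y}=0$ (possible precisely because every $|x_j|^2\le 1/2$), giving the bound $C^\prime_\tr\ge 1$. Equivalently and more elementarily: for unit vectors $\ket{u}\perp\ket{w}$ with $|u_j|=|w_j|$ for all $j$, one has $\|\ketbra{x}{x}-D\|_\tr\ge\tr\bigl[(\ketbra{x}{x}-D)(\ketbra{u}{u}-\ketbra{w}{w})\bigr]=|\braket{u}{x}|^2-|\braket{w}{x}|^2$ for every diagonal $D$, since the diagonal parts cancel. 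Some argument of this strength (a dual certificate or an equivalent trace-norm lower bound valid for arbitrary diagonal $D$) is what your proposal is missing; without it, neither case (a) nor the optimality claim in case (b) is established.
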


Before proving Theorem~\ref{thm:C'mu}, we note that $C^\prime_\tr(\rho)$ can be computed numerically via the following semidefinite program for arbitrary mixed states $\rho$:
\begin{equation}\begin{aligned}\label{sdp:mod_coh_primal}
	\text{minimize:} & \ \ \|\rho - D\|_\tr \\
	\text{subject to:} & \ \ D \ \text{diagonal} \\
	& \ \ D \succeq O.
\end{aligned}\end{equation}
However, Theorem~\ref{thm:C'mu} provides a much more explicit way of dealing with this quantity when $\rho = \ketbra{x}{x}$ is pure. We also note that we only describe ``an'' optimal $p$ and $\delta$ in the statement of the theorem (as opposed to ``the'' optimal $p$ and $\delta$), since the points attaining the minimum may not be unique, as we noted after Theorem~\ref{thm:mod_tr_coh_qubit}.

\begin{proof}
    We prove the result by showing that in each of case~(a) and~(b), $C^\prime_\tr(\ketbra{x}{x})$ is bounded both above and below by the indicated quantity. Also, we may assume without loss of generality that $\ket{x} = (x_1, \dots, x_n)^t \in \mathbb{R}^n$ with $x_1 \ge \cdots \ge x_n\ge 0$. This follows from the fact that the modified trace distance of coherence is invariant under maps of the form $\ket{x} \mapsto PU\ket{x}$, where $P$ is a permutation matrix and $U$ is a diagonal unitary matrix. In particular, this means that $\|\ket{x}\|_{\infty} = x_1$.
    
    In case~(a) of the theorem, $C^\prime_\tr(\ketbra{x}{x}) \leq 1$ trivially since we can choose $p = 0$ in definition~\eqref{defn:mod_tr_dist}. Similarly, in case~(b) we can choose
    \[
        p = 2x_1^2 - 1 \quad \text{and} \quad \delta = \mathrm{diag}(1,0,0,\ldots,0),
    \]
    as suggested by the theorem, and then we have $C^\prime_\tr(\ketbra{x}{x}) \leq \|\rho - p\delta\|_\tr$. To compute $\|\rho - p\delta\|_\tr$ we note that $\mathrm{rank}(\rho - p\delta) \leq 2$, and it is straightforward to verify that its non-zero eigenvalues are \[ \lambda_{\pm} = (1-x_1^2) \pm x_1\sqrt{1-x_1^2}, \]with corresponding (unnormalized) eigenvectors
    \begin{equation}\label{eq:vpm}
        \vec{v}_{\pm} = \big(\pm \sqrt{1-x_1^2}, x_2, x_3, \ldots, x_n\big)^t,
    \end{equation}respectively. Thus \[\|\rho - p\delta\|_\tr = |\lambda_+| + |\lambda_{-}| = 2x_1\sqrt{1-x_1^2},\]which establishes the desired $C^\prime_\tr(\ketbra{x}{x}) \leq 2x_1\sqrt{1-x_1^2}$ inequality.
    
    To obtain the opposite inequalities, we use weak duality applied to the semidefinite program~\eqref{sdp:mod_coh_primal}. In particular, a routine calculation shows that the dual of that semidefinite program has the form
    \begin{equation}\begin{aligned}\label{sdp:mod_coh_dual}
    	\text{maximize:} & \ \ -\bra{x}(Y + Y^*)\ket{x} \\
    	\text{subject to:} & \ \ \begin{bmatrix}X & Y \\ Y^* & Z\end{bmatrix} \succeq O \\
    	& \ \ \|X\|,\|Z\| \leq 1/2 \\
    	& \ \ \mathrm{diag}(Y) = \vec{0}.
    \end{aligned}\end{equation}
    Any feasible point that we can find for this dual problem immediately (by weak duality) gives a lower bound on $C^\prime_\tr(\ketbra{x}{x})$.
    
    In case~(a) of the theorem, we can choose $X = Z = I/2$. To see how we choose $Y$, first note that there exists a particular pure state $\ket{y}$ with the property that $|y_j| = |x_j|$ for all $j$ and $\braket{x}{y} = 0$. To construct such a $\ket{y}$, we just need to choose the phases $e^{i\theta_j}$ of each entry $y_j$ of $\ket{y}$, and we want them to satisfy
    \[
        \braket{x}{y} = \sum_{j=1}^n e^{i\theta_j}|x_j|^2 = 0.
    \]
    Well, since $|x_j|^2 \leq 1/2$ for all $j$, such phases do indeed exist (this is basically just the triangle inequality---we can choose $e^{i\theta_1} = 1$ and then choose the other phases so as to work our way back to the origin in the complex plane).
    
    Now that we have $\ket{y}$, we choose $Y = (\ketbra{y}{y} - \ketbra{x}{x})/2$. Then (since $|y_j| = |x_j|$ for all $j$) we have $\mathrm{diag}(Y) = \vec{0}$. Furthermore, $\ket{x}$ and $\ket{y}$ are orthogonal, so the eigenvalues of $Y$ are $\pm 1/2$ and some zeroes, so $\begin{bmatrix}I/2 & Y \\ Y^* & I/2\end{bmatrix} \succeq O$. Thus all of the constraints of the SDP~\eqref{sdp:mod_coh_dual} are satisfied, and the corresponding objective value is
    \[-\bra{x}(Y + Y^*)\ket{x} = |\braket{x}{x}|^2 - |\braket{x}{y}|^2 = 1 - 0 = 1,\]
    which establishes the desired lower bound $C^\prime_\tr(\ketbra{x}{x}) \ge 1$, and completes the proof of part~(a) of the theorem.
    
    On the other hand, in case~(b) of the theorem, we can choose $X = Z = I/2$ and $Y = (\ketbra{v_{-}}{v_{-}} - \ketbra{v_{+}}{v_{+}})/2$, where $\ket{v_{\pm}}$ is the normalization of the vectors $\vec{v}_{\pm}$ from Equation~\eqref{eq:vpm}:
    \[
        \ket{v_{\pm}} := \frac{1}{\sqrt{2}\sqrt{1 - x_1^2}}\vec{v}_{\pm}.
    \]
    By construction, $Y$ has eigenvalues $\pm 1/2$ and some zeroes, so $\begin{bmatrix}I/2 & Y \\ Y^* & I/2\end{bmatrix}$ is indeed positive semidefinite. The only other constraint to be checked is that $\mathrm{diag}(Y) = \vec{0}$, and this follows from the fact that the entries of $\ket{v_{-}}$ and $\ket{v_{+}}$ have the same absolute values as each other.
    
    Thus $X,Y,$ and $Z$ define a feasible point of the SDP~\eqref{sdp:mod_coh_dual}, and the corresponding objective value is easily computed to be\[-\bra{x}(Y + Y^*)\ket{x} = |\braket{x}{v_{+}}|^2 - |\braket{x}{v_{-}}|^2 = 2x_1\sqrt{1-x_1^2},\]which establishes the desired lower bound $C^\prime_\tr(\ketbra{x}{x}) \geq 2x_1\sqrt{1-x_1^2}$ and completes the proof.
\end{proof}

In addition to providing an explicit formula for $C^\prime_\tr(\ketbra{x}{x})$, Theorem~\ref{thm:C'mu} demonstrates that the modified trace distance of coherence may have some limitations as a measure of coherence (despite satisfying the requirements for it to be physically relevant presented in \cite{BCP14}), since it only depends on a the largest entry of $\ket{x}$ (see Figure~\ref{fig:ctr_mod_const}).
\begin{figure}[!htb]
	\centering
	\def\x{\noexpand\x}    
	\edef\ctr{2*max(\x,0.70710678118)*pow(max(0,1 - pow(max(\x,0.70710678118),2)),0.5)}
	\begin{tikzpicture}[scale=5]
		\foreach \x in {1,...,4} \draw[color=gray!25] (0.25*\x,0) -- (0.25*\x,1.1);
		\foreach \y in {1,...,4} \draw[color=gray!25] (0,0.25*\y) -- (1.1,0.25*\y);
		
		\draw (0.25,0.5pt) -- (0.25,-1pt) node[anchor=north] {\footnotesize 1/4};
		\draw (0.5,0.5pt) -- (0.5,-1pt) node[anchor=north] {\footnotesize 1/2};
		\draw (0.75,0.5pt) -- (0.75,-1pt) node[anchor=north] {\footnotesize 3/4};
		\draw (1,0.5pt) -- (1,-1pt) node[anchor=north] {\footnotesize 1};
		\draw (0.5pt,0.25) -- (-1pt,0.25) node[anchor=east] {\footnotesize 1/4};
		\draw (0.5pt,0.5) -- (-1pt,0.5) node[anchor=east] {\footnotesize 1/2};
		\draw (0.5pt,0.75) -- (-1pt,0.75) node[anchor=east] {\footnotesize 3/4};
		\draw (0.5pt,1) -- (-1pt,1) node[anchor=east] {\footnotesize 1};
		
        \draw[color=myblue, domain=0:1.001, samples=201, /pgf/fpu, /pgf/fpu/output format=fixed] 
		plot (\x, {\ctr});
		
		\draw[thick,-to] (-0.1,0) -- (1.1,0) node[anchor=west]{$\|\ket{x}\|_{\infty}$};
		\draw[thick,-to] (0,-0.1) -- (0,1.1) node[anchor=south]{$C^\prime_\tr(\ketbra{x}{x})$};
		\end{tikzpicture}
	\caption{$C^\prime_\tr(\ketbra{x}{x})$ as a function of $\|\ket{x}\|_{\infty}$.}\label{fig:ctr_mod_const}
\end{figure}

Furthermore, $C^\prime_\tr(\ketbra{x}{x})$ is constant and equal to its maximal value for a very large proportion of the state space. For example, if $\ket{x} = (1,1,0,0,\ldots,0)/\sqrt{2}$ and $\ket{y} = (1,1,1,\ldots,1)/\sqrt{n}$ then $C^\prime_\tr(\ketbra{x}{x}) = C^\prime_\tr(\ketbra{y}{y})$, which seems very undesirable. Contrast this with the case of the trace distance of coherence, the robustness of coherence, the relative entropy of coherence, the the $\ell_1$-norm of coherence, all of which attain their maximum values only at the pure states $\ket{x} \in \mathbb{C}^n$ with $|x_j| = 1/\sqrt{n}$ for all $j$.

This problem does not present itself in dimension $n = 2$, since every pure state $\ket{x} \in \C^2$ has $\|\ket{x}\|_{\infty} \geq 1/\sqrt{2}$. However, in higher dimensions, $C^\prime_\tr(\ketbra{x}{x})$ provides no information whatsoever on the vast majority of pure states, since concentration of measure (see \cite{Hay10}, for example) says that the proportion of pure states with $\|\ket{x}\|_{\infty} \geq 1/\sqrt{2}$ decreases exponentially in the dimension, and these are the only pure states with $C^\prime_\tr(\ketbra{x}{x}) \neq 1$. The following theorem quantifies this observation explicitly.

\begin{theorem}\label{thm:proportion_of_useless}
    The proportion (with respect to uniform Haar measure) of pure states $\ket{x} \in \mathbb{C}^n$ for which $C^\prime_\tr(\ketbra{x}{x}) = 1$ is exactly $1-n/2^{n-1}$.
\end{theorem}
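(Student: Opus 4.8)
The plan is to reduce the statement to a volume computation on the probability simplex using Theorem~\ref{thm:C'mu}. By that theorem, $C^\prime_\tr(\ketbra{x}{x}) = 1$ if and only if $\|\ket{x}\|_\infty \le 1/\sqrt 2$, that is, if and only if $|x_j|^2 \le 1/2$ for every coordinate $j$. So what must be computed is the Haar probability that $\max_{1 \le j \le n} |x_j|^2 \le 1/2$. First I would use the standard fact (see, e.g., \cite{Hay10}) that if $\ket{x} \in \C^n$ is Haar-random, then the vector $p := (|x_1|^2, \dots, |x_n|^2)$ is distributed uniformly on the simplex $\Delta_{n-1} := \{p \in \R^n : p_j \ge 0,\ \sum_j p_j = 1\}$ with respect to its natural $(n-1)$-dimensional volume. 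Thus the quantity we want is the normalized volume of the region $\{p \in \Delta_{n-1} : p_j \le 1/2 \text{ for all } j\}$.

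I would then compute the complementary probability, the crucial point being that since $\sum_j p_j = 1$, at most one coordinate of $p$ can exceed $1/2$ (two of them would already sum to more than $1$). Hence the events $\{p_j > 1/2\}$, $j = 1, \dots, n$, are pairwise disjoint up to a set of measure zero, so $\Pr[\exists j : p_j > 1/2] = n\,\Pr[p_1 > 1/2]$. To evaluate $\Pr[p_1 > 1/2]$, note that the substitution $p_1 = 1/2 + r$ maps the region $\{p \in \Delta_{n-1} : p_1 > 1/2\}$ bijectively and affinely onto a copy of $\Delta_{n-1}$ rescaled by the factor $1/2$ in each of its $n-1$ free directions; its relative volume is therefore $(1/2)^{n-1}$. (Equivalently, the marginal density of $p_1$ is $(n-1)(1 - p_1)^{n-2}$ on $[0,1]$, and $\int_{1/2}^1 (n-1)(1-t)^{n-2}\,dt = 2^{-(n-1)}$.) Combining these, the proportion of pure states with $\|\ket{x}\|_\infty > 1/\sqrt 2$ is $n/2^{n-1}$, and hence the proportion with $C^\prime_\tr(\ketbra{x}{x}) = 1$ is $1 - n/2^{n-1}$, as claimed.

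There is no genuinely difficult step once Theorem~\ref{thm:C'mu} is available. The only points requiring a little care are the (standard) identification of the push-forward of Haar measure under $\ket{x} \mapsto (|x_1|^2, \dots, |x_n|^2)$ with the uniform measure on $\Delta_{n-1}$, and the remark that the boundary sets $\{|x_j| = 1/\sqrt 2\}$ are Haar-null, so the borderline case $\|\ket{x}\|_\infty = 1/\sqrt 2$ (where Theorem~\ref{thm:C'mu}(a) still yields $C^\prime_\tr = 1$) does not affect the count.
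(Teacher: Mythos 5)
Your proposal is correct, and its skeleton matches the paper's: reduce via Theorem~\ref{thm:C'mu} to computing the Haar probability that $\max_j |x_j|^2 \le 1/2$, observe that at most one squared coordinate can exceed $1/2$ so the complementary events are mutually exclusive and the bad probability is $n\,\Pr\bigl[|x_1|^2 > 1/2\bigr]$, and then show this single-coordinate probability equals $2^{-(n-1)}$. Where you diverge is in that last computation: the paper generates the Haar vector from $2n$ real Gaussians (Muller's method), rewrites the event as $P(X \ge Y)$ with $X \sim \chi^2_2$ and $Y \sim \chi^2_{2n-2}$, and integrates the $F_{2,2n-2}$ density, whereas you invoke the standard fact that $(|x_1|^2,\dots,|x_n|^2)$ is uniform on the simplex $\Delta_{n-1}$ and compute the relative volume of the corner $\{p_1 > 1/2\}$ as $(1/2)^{n-1}$ by an affine rescaling (or the Beta$(1,n-1)$ marginal). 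Your route is more geometric and arguably more transparent, at the cost of importing the Dirichlet/uniform-on-simplex fact as a black box --- a fact whose usual proof runs through exactly the Gaussian representation the paper uses, so the two arguments are close cousins; the paper's version is self-contained given the Gaussian sampling description and off-the-shelf $\chi^2$/F-distribution facts. Two small things you handle well that the paper leaves implicit: the strictness needed for the ``only if'' direction (for $\|\ket{x}\|_\infty > 1/\sqrt{2}$ one has $2\|\ket{x}\|_\infty\sqrt{1-\|\ket{x}\|_\infty^2} < 1$, since $t \mapsto 2t\sqrt{1-t^2}$ is strictly decreasing past $1/\sqrt{2}$), and the observation that the boundary set $\{\|\ket{x}\|_\infty = 1/\sqrt{2}\}$ is Haar-null so the borderline case does not affect the proportion.
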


For example, this theorem says that already in dimension $n = 19$, over 99.99\% of pure states (chosen according to uniform Haar measure) have $C^\prime_\tr(\ketbra{x}{x}) = 1$ (see Figure~\ref{fig:ctr_mod_dims}).

\begin{figure}[!htb]
	\centering
	\def\x{\noexpand\x}    
	\edef\ctr{2*\x*pow(1 - pow(\x,2),0.5)}
	\begin{tikzpicture}[scale=5]
		\foreach \x in {1,...,4} \draw[color=gray!25] (0.25*\x,0) -- (0.25*\x,1.1);
		\foreach \y in {1,...,4} \draw[color=gray!25] (0,0.25*\y) -- (1.1,0.25*\y);
		
		\draw (0.25,0.5pt) -- (0.25,-1pt) node[anchor=north] {\footnotesize 5};
		\draw (0.5,0.5pt) -- (0.5,-1pt) node[anchor=north] {\footnotesize 10};
		\draw (0.75,0.5pt) -- (0.75,-1pt) node[anchor=north] {\footnotesize 15};
		\draw (1,0.5pt) -- (1,-1pt) node[anchor=north] {\footnotesize 20};
		\draw (0.5pt,0.25) -- (-1pt,0.25) node[anchor=east] {\footnotesize 0.9};
		\draw (0.5pt,0.5) -- (-1pt,0.5) node[anchor=east] {\footnotesize 0.99};
		\draw (0.5pt,0.75) -- (-1pt,0.75) node[anchor=east] {\footnotesize 0.999};
		\draw (0.5pt,1) -- (-1pt,1) node[anchor=east] {\footnotesize 0.9999};
		
		\draw[color=myblue,fill=myblue] (0.1,0) circle (0.0125); 
		\draw[color=myblue,fill=myblue] (0.15,0.0312) circle (0.0125); 
		\draw[color=myblue,fill=myblue] (0.2,0.0753) circle (0.0125); 
		\draw[color=myblue,fill=myblue] (0.25,0.1263) circle (0.0125); 
		\draw[color=myblue,fill=myblue] (0.3,0.1817) circle (0.0125); 
		\draw[color=myblue,fill=myblue] (0.35,0.2403) circle (0.0125); 
		\draw[color=myblue,fill=myblue] (0.4,0.3010) circle (0.0125); 
		\draw[color=myblue,fill=myblue] (0.45,0.3635) circle (0.0125); 
		\draw[color=myblue,fill=myblue] (0.5,0.4273) circle (0.0125); 
		\draw[color=myblue,fill=myblue] (0.55,0.4922) circle (0.0125); 
		\draw[color=myblue,fill=myblue] (0.6,0.5580) circle (0.0125); 
		\draw[color=myblue,fill=myblue] (0.65,0.6246) circle (0.0125); 
		\draw[color=myblue,fill=myblue] (0.7,0.6918) circle (0.0125); 
		\draw[color=myblue,fill=myblue] (0.75,0.7596) circle (0.0125); 
		\draw[color=myblue,fill=myblue] (0.8,0.8278) circle (0.0125); 
		\draw[color=myblue,fill=myblue] (0.85,0.8965) circle (0.0125); 
		\draw[color=myblue,fill=myblue] (0.9,0.9656) circle (0.0125); 
		\draw[color=myblue,fill=myblue] (0.95,1.0349) circle (0.0125); 
		\draw[color=myblue,fill=myblue] (1,1.1046) circle (0.0125); 

		\draw[color=myblue] (0.1,0) -- (0.15,0.0312) -- (0.2,0.0753) -- (0.25,0.1263) -- (0.3,0.1817) -- (0.35,0.2403) -- (0.4,0.3010) -- (0.45,0.3635) -- (0.5,0.4273) -- (0.55,0.4922) -- (0.6,0.5580) -- (0.65,0.6246) -- (0.7,0.6918) -- (0.75,0.7596) -- (0.8,0.8278) -- (0.85,0.8965) -- (0.9,0.9656) -- (0.95,1.0349) -- (1,1.1046);
		
		\draw[thick,-to] (-0.1,0) -- (1.1,0) node[anchor=west]{$n$};
		\draw[thick,-to] (0,-0.1) -- (0,1.1) node[anchor=south]{proportion};
		\end{tikzpicture}
	\caption{The proportion of pure states $\ket{x} \in \C^n$ with $C^\prime_\tr(\ketbra{x}{x}) = 1$, for values of $n$ ranging from $2$ to $20$. Notice the log scale on the $y$-axis.}\label{fig:ctr_mod_dims}
\end{figure}

\begin{proof}[Proof of Theorem~\ref{thm:proportion_of_useless}]
    One way of generating Haar-uniform pure states $\ket{x} \in \C^n$ is to independently generate $2n$ $N(0,1)$-distributed (i.e., normal-distributed with mean~$0$ and variance~$1$) random variables $y_1,y_2,\ldots,y_n, z_1,z_2, \ldots, z_n$ and then set \cite{Mul59} \[\ket{x} = \frac{(y_1 + iz_1,y_2+iz_2, \ldots, y_n + iz_n)}{\|(y_1 + iz_1,y_2+iz_2, \ldots, y_n + iz_n)\|}.\]
    
    To start, we compute the probability that $|x_1| := \sqrt{|y_1|^2 + |z_1|^2} \geq 1/\sqrt{2}$, which is equivalent to \[|y_1|^2 + |z_1|^2 \geq (|y_2|^2 + \cdots + |y_n|^2) + (|z_2|^2 + \cdots + |z_n|^2).\] The sum of the squares of $k$ independent $N(0,1)$-distributed random variables follows a chi-squared distribution with $k$ degrees of freedom (facts like this one are contained in standard mathematical statistics textbooks like \cite{MathStatBook1}), so we are asking exactly for $P(X \geq Y)$, where $X \sim \chi^2_2$ and $Y \sim \chi^2_{2n-2}$. Then
    \begin{equation*}
        P(X \geq Y) = P\left(\frac{X}{Y} \geq 1\right) = P\left(\frac{X/2}{Y/(2n-2)} \geq n-1\right).
    \end{equation*}
    Well, $\frac{X/2}{Y/(2n-2)}$ follows an F-distribution with $2$ and $2n-2$ degrees of freedom, which is known to have probability density function
    \begin{equation*}
        f_{2,2n-2}(x) = \left(\frac{n-1}{x+n-1}\right)^n.
    \end{equation*}
    Thus we conclude that
    \begin{equation*}\begin{aligned}
        P(|x_1| \geq 1/\sqrt{2}) & = P\left(\frac{X/2}{Y/(2n-2)} \geq n-1\right) \\
        & = \int_{n-1}^\infty f_{2,2n-2}(x) \, dx \\
        & = -\left(\frac{n-1}{x+n-1}\right)^{n-1}\Bigg|_{n-1}^\infty \\
        & = 1/2^{n-1}.
    \end{aligned}\end{equation*}
    
    Since the $n$ events $\big\{|x_j| \geq 1/\sqrt{2}\big\}$ have the same probability regardless of $j$ and are mutually exclusive, we conclude that \[P(\max_j\{|x_j|\} \geq 1/\sqrt{2}) = nP(|x_1| \geq 1/\sqrt{2}) = n/2^{n-1},\]so the probability that $|x_j| \leq 1/\sqrt{2}$ for all $j$ (and thus $C^\prime_\tr(\ketbra{x}{x}) = 1$) is $1-n/2^{n-1}$.
\end{proof}

\section{IV. Higher-rank density matrices}\label{sec:higher_rank}

The modified trace distance of coherence behaves more like one might na\"{i}vely expect on general mixed states, but a similar ``almost-constant'' phenomenon seems to occur if we fix the rank of the density matrices (not necessarily equal to $1$) and let the dimension grow. For example, Figure~\ref{fig:ctr_mod_dims_rank} illustrates what proportion of density matrices $\rho$ (again, with respect to uniform Haar measure \cite{HaarExplain}) of a given rank in a given dimension have $C^\prime_\tr(\rho) = 1$.

\begin{figure}[!htb]
	\centering
	\def\x{\noexpand\x}    
	\edef\ctr{2*\x*pow(1 - pow(\x,2),0.5)}
	\begin{tikzpicture}[yscale=5,xscale=3.3333]
		\foreach \x in {1,...,6} \draw[color=gray!25] (0.25*\x,0) -- (0.25*\x,1.1);
		\foreach \y in {1,...,4} \draw[color=gray!25] (0,0.25*\y) -- (1.65,0.25*\y);
		
		\draw (0.25,0.5pt) -- (0.25,-1pt) node[anchor=north] {\footnotesize 5};
		\draw (0.5,0.5pt) -- (0.5,-1pt) node[anchor=north] {\footnotesize 10};
		\draw (0.75,0.5pt) -- (0.75,-1pt) node[anchor=north] {\footnotesize 15};
		\draw (1,0.5pt) -- (1,-1pt) node[anchor=north] {\footnotesize 20};
		\draw (1.25,0.5pt) -- (1.25,-1pt) node[anchor=north] {\footnotesize 25};
		\draw (1.5,0.5pt) -- (1.5,-1pt) node[anchor=north] {\footnotesize 30};
		\draw (0.75pt,0.25) -- (-1.5pt,0.25) node[anchor=east] {\footnotesize 1/4};
		\draw (0.75pt,0.5) -- (-1.5pt,0.5) node[anchor=east] {\footnotesize 1/2};
		\draw (0.75pt,0.75) -- (-1.5pt,0.75) node[anchor=east] {\footnotesize 3/4};
		\draw (0.75pt,1) -- (-1.5pt,1) node[anchor=east] {\footnotesize 1};
		
		\draw[color=myblue11] plot [smooth,tension=0.25] coordinates {(1.35,0.0) (1.4,0.0193) (1.45,0.0496) (1.475,0.0826) (1.5,0.1348)}; 
		\node at (1.568,0.1308) [rectangle,draw=white,label={center:\textcolor{myblue11}{$11$}}] {};

		\draw[color=myblue10] plot [smooth,tension=0.25] coordinates {(1.25,0.0) (1.3,0.0298) (1.35,0.0847) (1.4,0.1551) (1.45,0.2662) (1.5,0.4317)}; 
		\node at (1.568,0.4317) [rectangle,draw=white,label={center:\textcolor{myblue10}{$10$}}] {};

		\draw[color=myblue9] plot [smooth,tension=0.25] coordinates {(1.1,0.0) (1.15,0.0228) (1.2,0.0645) (1.25,0.1369) (1.3,0.2700) (1.35,0.4011) (1.4,0.5144) (1.45,0.6287) (1.5,0.7303)}; 
		\node at (1.568,0.7303) [rectangle,draw=white,label={center:\textcolor{myblue9}{$9$}}] {};
    
		\draw[color=myblue8] plot [smooth,tension=0.25] coordinates {(0.95,0.0) (1,0.0120) (1.05,0.0440) (1.1,0.1000) (1.15,0.1960) (1.2,0.3340) (1.25,0.4850) (1.3,0.6180) (1.35,0.7160) (1.4,0.7960) (1.45,0.8520) (1.5,0.9000)}; 
		\node at (1.5,0.875) [rectangle,draw=white,minimum size=1.4em,fill=white,label={center:\textcolor{myblue8}{$8$}}] {};
		
		\draw[color=myblue7] plot [smooth,tension=0.25] coordinates {(0.8,0.0) (0.85,0.0037) (0.9,0.0277) (0.95,0.0720) (1,0.1499) (1.05,0.2970) (1.1,0.4273) (1.15,0.5752) (1.2,0.6858) (1.25,0.7704) (1.3,0.8363) (1.35,0.8875) (1.4,0.9310) (1.45,0.9613) (1.5,0.9742)}; 
		\node at (1.34,0.875) [rectangle,draw=white,minimum size=1.4em,fill=white,label={center:\textcolor{myblue7}{$7$}}] {};

		\draw[color=myblue6] plot [smooth,tension=0.25] coordinates {(0.7,0.0) (0.75,0.0057) (0.8,0.0430) (0.85,0.1347) (0.9,0.2722) (0.95,0.4126) (1,0.5286) (1.05,0.6476) (1.1,0.7558) (1.15,0.8309) (1.2,0.8795) (1.25,0.9241) (1.3,0.9556) (1.35,0.9713) (1.4,0.9829) (1.45,0.9894) (1.5,0.9943) (1.55,0.9970) (1.6,0.9985) (1.65,0.9999)}; 
		\node at (1.17,0.875) [rectangle,draw=white,minimum size=1.4em,fill=white,label={center:\textcolor{myblue6}{$6$}}] {};
    
		\draw[color=myblue5] plot [smooth,tension=0.25] coordinates {(0.55,0.0) (0.6,0.0) (0.65,0.0237) (0.7,0.0881) (0.75,0.2271) (0.8,0.3729) (0.85,0.5190) (0.9,0.6393) (0.95,0.7458) (1,0.8333) (1.05,0.8820) (1.1,0.9227) (1.15,0.9526) (1.2,0.9704) (1.25,0.9811) (1.3,0.9890) (1.35,0.9937) (1.4,0.9976) (1.45,0.9986) (1.5,0.9994) (1.55,0.9999) (1.65,0.9999)}; 
		\node at (1.04,0.875) [rectangle,draw=white,minimum size=1.4em,fill=white,label={center:\textcolor{myblue5}{$5$}}] {};
    
		\draw[color=myblue4] plot [smooth,tension=0.25] coordinates {(0.45,0) (0.5,0.0073) (0.55,0.0610) (0.6,0.1659) (0.65,0.3302) (0.7,0.5073) (0.75,0.6434) (0.8,0.7512) (0.85,0.8385) (0.9,0.8901) (0.95,0.9360) (1,0.9601) (1.05,0.9756) (1.1,0.9860) (1.15,0.9927) (1.2,0.9950) (1.25,0.9976) (1.3,0.9990) (1.35,0.9999) (1.65,0.9999)}; 
		\node at (0.88,0.875) [rectangle,draw=white,minimum size=1.4em,fill=white,label={center:\textcolor{myblue4}{$4$}}] {};

		\draw[color=myblue3] plot [smooth,tension=0.25] coordinates {(0.35,0.0) (0.4,0.0694) (0.425,0.1294) (0.45,0.2014) (0.5,0.3483) (0.55,0.4992) (0.6,0.6517) (0.65,0.7599) (0.7,0.8333) (0.75,0.8820) (0.8,0.9227) (0.85,0.9526) (0.9,0.9704) (0.95,0.9811) (1,0.9890) (1.05,0.9937) (1.1,0.9976) (1.15,0.9986) (1.2,0.9994) (1.25,0.9999) (1.65,0.9999)}; 
		\node at (0.74,0.875) [rectangle,draw=white,minimum size=1.4em,fill=white,label={center:\textcolor{myblue3}{$3$}}] {};
    
		\draw[color=myblue2] plot [smooth,tension=0.25] coordinates {(0.225,0.0) (0.25,0.0323) (0.3,0.1923) (0.35,0.3686) (0.4,0.5376) (0.45,0.7101) (0.5,0.8160) (0.55,0.8821) (0.6,0.9308) (0.65,0.9619) (0.7,0.9776) (0.75,0.9895) (0.8,0.9931) (0.85,0.9965) (0.9,0.9978) (0.95,0.9992) (1,0.9999) (1.05,0.9999) (1.65,0.9999)}; 
		\node at (0.545,0.875) [rectangle,draw=white,minimum size=1.4em,fill=white,label={center:\textcolor{myblue2}{$2$}}] {};
    
		\draw[color=myblue] (0.1,0) -- (0.15,0.2506) -- (0.2,0.4998) -- (0.25,0.6871) -- (0.3,0.8125) -- (0.35,0.8911) -- (0.4,0.9377) -- (0.45,0.9648) -- (0.5,0.9806) -- (0.55,0.9892) -- (0.6,0.9942) -- (0.65,0.9969) -- (0.7,0.9982) -- (0.75,0.9991) -- (0.8,0.9995) -- (0.85,0.9997) -- (0.9,0.9999) -- (1.65,0.9999); 
		\node at (0.345,0.875) [rectangle,draw=white,minimum size=1.4em,fill=white,label={center:\textcolor{myblue}{$1$}}] {};
		\node at (0.18,0.875) [rectangle,draw=white,minimum size=1.4em,fill=white,label={center:\textcolor{myblue}{rank:}}] {};
		
		\draw[thick,-to] (-0.25,0) -- (1.65,0) node[anchor=west]{$n$};
		\draw[thick,-to] (0,-0.1) -- (0,1.1) node[anchor=south]{proportion};
		\end{tikzpicture}
	\caption{The proportion of mixed states $\rho \in M_n(\mathbb{C})$ with $C^\prime_\tr(\rho) = 1$, for values of $n$ ranging from $2$ to $30$ and various small ranks. Each curve corresponds to density matrices of a particular rank, which is indicated on the curve itself. The blue curve for rank-$1$ states is the same as in Figure~\ref{fig:ctr_mod_dims}, but not on a log scale.}\label{fig:ctr_mod_dims_rank}
\end{figure}
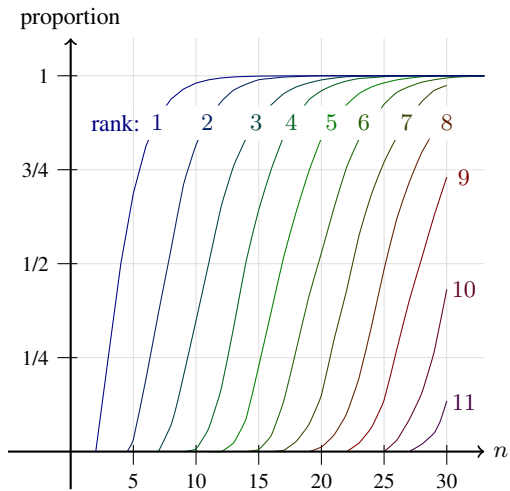

Numerically it seems that approximately 50\% of states $\rho \in M_n(\mathbb{C})$ with $\mathrm{rank}(\rho) \leq n/3$ have $C^\prime_{\tr}(\rho) = 1$. These numerics were computed using the YALMIP \cite{YALMIP} optimization package for MATLAB, and our code for computing the modified trace distance of coherence can be downloaded from \cite{MATLABcode}.

\section{V. Conclusions}\label{sec:conclusions}

We have shown that the modified trace distance of coherence is constant and equal to its maximum value of $1$ on all except for an exponentially-small ($n/2^{n-1}$ in dimension $n$) fraction of pure states, and we have provided numerical evidence that suggests a similar phenomenon occurs for density matrices of any fixed rank. It would be interesting to pin down this numerical observation rigorously, but we believe that the pure state result is enough to suggest that other measures of coherence, which attain their maximal value at essentially a unique state, should be preferred over the modified trace distance of coherence.\\

\noindent\textbf{Acknowledgements.} N.J.\ was supported by {NSERC} Discovery Grant number RGPIN-2016-04003. 
 C.-K.L.\ is an affiliate member of the Institute for Quantum Computing, 
University of Waterloo. He is an honorary professor of the University of Hong Kong and the Shanghai 
University. His research was supported
by USA NSF grant DMS 1331021, Simons Foundation Grant 351047, and NNSF of China
Grant 11571220. S.P.\ was supported by NSERC Discovery Grant number 1174582, the Canada Foundation for Innovation, and the Canada Research Chairs Program.

\end{document}